\DeclareMathOperator{\Tr}{\mathrm{Tr}}
\newtheorem{theorem}{Theorem}
\newtheorem{definition}{Definition}
\newtheorem{lemma}[theorem]{Lemma}
\newtheorem{condition}[definition]{Condition}
\newcommand{\argmin}{\mathop{\rm argmin}\limits}
\newcommand{\nc}{\newcommand}
\nc{\ketbra}[2]{|#1\rangle\!\langle#2|}
\nc{\bra}[1]{\langle#1|}
\nc{\ket}[1]{|#1\rangle}
\nc{\st}{{\text{ s.t. }}}
\nc{\tr}{\operatorname{Tr}}
\def\BibTeX{{\rm B\kern-.05em{\sc i\kern-.025em b}\kern-.08em
    T\kern-.1667em\lower.7ex\hbox{E}\kern-.125emX}}
\nc{\cM}{{\cal M}}
\nc{\cN}{{\cal N}}
\nc{\cS}{{\cal S}}
\nc{\cH}{{\cal H}}
\nc{\cG}{{\cal G}}
\nc{\cF}{{\cal F}}
\nc{\cI}{{\cal I}}
\nc{\cD}{{\cal D}}
\def\Label#1{\label{#1}\ [\ \text{#1}\ ]\ }
\def\Label{\label}
\begin{document}

\title{A Posteriori Certification Framework for Generalized Quantum Arimoto–Blahut Algorithms\thanks{M.H. was supported in part by 
the Guangdong Provincial Quantum Science Strategic Initiative (Grant No. GDZX2505003), 
the General R\&D Projects of 1+1+1 CUHK-CUHK(SZ)-GDST Joint Collaboration Fund (Grant No. GRDP2025-022), 
and
the Shenzhen International Quantum Academy (Grant No. SIQA2025KFKT07).}}

\author{
\IEEEauthorblockN{Geng Liu}
\IEEEauthorblockA{\textit{School of Science and Engineering} \\
\textit{The Chinese University of Hong Kong, Shenzhen}, China \\
\textit{International Quantum Academy (SIQA)}, Shenzhen, China}
\and
\IEEEauthorblockN{Masahito Hayashi}
\IEEEauthorblockA{\textit{School of Data Science} \\
\textit{The Chinese University of Hong Kong, Shenzhen}, China \\
\textit{International Quantum Academy (SIQA)}, Shenzhen, China \\
\textit{Graduate School of Mathematics, Nagoya University}, Japan \\
email: hmasahito@cuhk.edu.cn}
}

\maketitle

\begin{abstract}
The generalized quantum Arimoto--Blahut (QAB) algorithm is a powerful derivative-free iterative method in quantum information theory. A key obstacle to its broader use is that existing convergence guarantees typically rely on analytical conditions that are either overly restrictive or difficult to verify for concrete problems. We address this issue by introducing an \emph{a posteriori} certification viewpoint: instead of requiring fully \emph{a priori} verifiable assumptions, we provide convergence and error guarantees that can be validated directly from the iterates produced by the algorithm. Specifically, we prove a generalized global convergence theorem showing that, under convexity and a substantially weaker \emph{numerically verifiable} condition, the QAB iteration converges to the global minimizer. This theorem yields a practical certification procedure: by checking explicit inequalities along the computed trajectory, one can certify global optimality and bound the suboptimality of the obtained value. As an application, we develop a certified iterative scheme for computing the quantum relative entropy of channels, a fundamental measure of distinguishability in quantum dynamics. This quantity is notoriously challenging to evaluate numerically: gradient-based methods are impeded by the complexity of matrix functions such as square roots and logarithms, while recent semidefinite programming approaches can become computationally and memory intensive at high precision. Our method avoids these bottlenecks by combining the QAB iteration with \emph{a posteriori} certification, yielding an efficient and scalable algorithm. Numerical experiments demonstrate rapid convergence and improved scalability and adaptivity compared with SDP-based approaches.
\end{abstract}

\begin{IEEEkeywords}
optimization, posteriori verification, quantum operation, iterative algorithm, quantum relative entropy
\end{IEEEkeywords}

\section{Introduction}
The Arimoto--Blahut (AB) algorithm is a cornerstone of classical information theory \cite{Arimoto,Blahut},
providing efficient iterative procedures for computing channel capacities, rate--distortion functions, and related quantities.
A key attraction of AB-type methods is that they produce simple closed-form updates and avoid explicit gradient evaluations.
Motivated by the growing need for scalable numerical tools in quantum information theory,
many works have extended AB-type alternating-optimization ideas to quantum settings
\cite{Nagaoka,Dupuis,Sutter,Li-Cai,RISB}.
These quantum/generalized AB algorithms have enabled numerical evaluation of several entropic variational formulas,
including quantum rate--distortion, quantum channel capacities, and quantum information bottleneck problems
\cite{RISB,hayashi2024ab,Hay25}.

Despite their empirical success, AB-type algorithms often face a fundamental theoretical bottleneck:
\emph{global} convergence guarantees typically rely on structural conditions that are difficult to verify for concrete objectives.
In particular, in the generalized quantum AB framework of \cite{hayashi2024ab,Hay25},
one can ensure monotone improvement under a certain inequality condition along updates,
but guaranteeing convergence to the \emph{global} optimum has required additional assumptions
whose verification is frequently intractable in applications.
This mismatch between (i) the practicality of AB iterations and (ii) the difficulty of verifying global-optimality conditions
has limited the scope of AB-type methods as \emph{certified} numerical algorithms.

At the same time, it is important to distinguish our goal from existing \emph{a posteriori stopping} rules.
For example, Ramakrishnan \emph{et al.} \cite{RISB} provide quantum BA algorithms for several capacity formulas,
prove \emph{a priori} $\varepsilon$-approximation guarantees after a prescribed number of iterations,
and propose an \emph{a posteriori stopping criterion} to terminate earlier in practice.
In contrast, the present work focuses on a different and complementary objective:
\emph{a posteriori certification of global optimality and numerical precision} 
\cite{BoydVandenberghe2004,NocedalWright2006,AinsworthOden2000,Verfurth2013}
for generalized quantum AB iterations
in settings where the usual analytic conditions are hard to check.

We introduce an \emph{a posteriori certification} viewpoint for AB-type algorithms.
We prove a generalized global convergence theorem showing that,
for convex objectives, global optimality of the AB fixed point follows from convexity together with a
substantially weaker condition that is \emph{numerically verifiable from the produced iterates}.
This result yields a practical certification procedure:
by checking explicit inequalities along the computed trajectory, one can
(i) certify that the limit point is a global minimizer and
(ii) obtain an explicit bound on the remaining suboptimality.

To demonstrate the impact of our certification framework,
we apply the generalized quantum AB algorithm to compute the quantum relative entropy of channels,
a fundamental measure of distinguishability at the level of quantum dynamics.
For channels $\cN_{A\to B}$ and $\cM_{A\to B}$, it is defined as
\begin{equation}\label{def of qre}
D(\cN\|\cM):=\sup_{\rho_{AR}} D\!\left(\cN_{A\to B}(\rho_{AR})\big\|\cM_{A\to B}(\rho_{AR})\right),
\end{equation}
where the optimization is over all input states with an arbitrary reference system.
Although this quantity has operational interpretations in channel discrimination and resource theories
\cite{cooney2016strong,wilde2020amortized,wang2019resource},
its numerical evaluation is challenging:
gradient-based solvers are hindered by matrix functions such as square roots and logarithms,
while recent SDP-based approaches \cite{wilde2025sdp} can become computationally and memory intensive at high precision.
In this setting, analytically verifying previously known AB convergence conditions is particularly difficult,
whereas convexity of the objective is available \cite{khatri2020principles}.
Our certification theorem bridges this gap: it enables a certified AB computation of $D(\cN\|\cM)$
by verifying the required condition directly from the iterates, yielding an efficient and scalable alternative to SDP.

The remainder of this paper is organized as follows.
Section~\ref{sec:algorithm} reviews the generalized quantum AB iteration and presents our certification criteria.
Section~\ref{sec: application} applies the method to channel relative entropy.
Section~\ref{sec:experiment} reports numerical experiments and compares with the SDP approach in \cite{wilde2025sdp}.
Section~\ref{sec:discussion} concludes with discussion and open directions.

\section{Generalized quantum Arimoto-Blahut algorithm}\label{sec:algorithm}
\subsection{Introduction to general formulation of Arimoto-Blahut Algorithm}\label{subsec:ab algorithm}

The Arimoto-Blahut algorithm is a well-known alternating optimization algorithm to compute the  
classical channel capacities and rate-distortion functions. Recently, many works have extended this algorithm to the quantum setting, and it has been shown that this algorithm has broad applications in various tasks of quantum information theory. Here, we consider the general form of the minimization problem in finite-size quantum systems. 

Firstly, we introduce the mixture family in quantum setting\cite{Amari-Nagaoka,H23}.
For $k$ linearly independent
Hermitian matrices $H_1, \ldots, H_k$ on ${\cal H}$
and a constant vector $c=(c_1, \ldots, c_k)^{T}\in \mathbb{R}^k$, 
we say that a subset $\cM\subset \cS(\cH)$ is the mixture family generated by the $k$ linear constraints
$    \tr\rho H_j = c_j$
for $j=1,...,k$ where ${\cal S}({\cal H})$ is
the set of density matrices on ${\cal H}$.
Then, the mixture family $\cM$ is written as
\begin{align}
{\cal M}:= \{\rho \in {\cal S}({\cal H})| \Tr \rho H_j=c_j \hbox{ for } j=1, \ldots, k
\}.\Label{MDP}
\end{align}
Assuming that $\dim {\cal H}=d$, it is natural to find additional $d^2-1-k$ linearly independent Hermitian matrices 
$H_{k+1}, \ldots H_{d^2-1}$ such that
$H_{1}, \ldots H_{d^2-1}$ are linearly independent.
Thus, we can make parameterization of a density matrix $\rho$ by using the parameter $\eta=(\eta_1, \ldots, \eta_{d^2-1})$ where
$ \eta_j= \Tr \rho H_j$.
In the following,
we expect that the number $k$ of linear constraints is much smaller than $d^2$.


Given a continuous function $\Omega$ from a mixture family ${\cal M}$ to $B({\cal H})$, which is
the set of Hermitian matrices on ${\cal H}$, 
we consider the minimization of ${\cal G}(\rho):= \Tr \rho \Omega[\rho]$ over density matrix $\rho$. 
\if0
This formulation not only covers the optimization of quantum states but also covers the optimization of a quantum operation. We can construct a bijection between the set of TPCP maps $\{\cE_{A\to B}\}$ and the set of quantum states with linear constraints $\{\sigma_{AB}:\Tr_B \sigma_{AB}=\rho_A\}$, where $\rho_A$ is fixed on system $\cA$. This is because that we can choose a purification $\ket{\phi_{AA'}}$ on $\cA\otimes \cA'$ of $\rho_A$ ($\cA'$ is isometric to $\cA$) such that we have the state 
\begin{align}
\sigma_{AB}:= 
{\cal E}_{A' \to B}(|\phi_{AA'}\rangle\langle \phi_{AA'}|).\label{VSY}
\end{align}
and the state $\sigma_{AB}$ satisfies the condition
\begin{align}
\Tr_B\sigma_{AB}=\rho_A.\label{NNE}
\end{align}
Therefore,  Given the continuous function $\Omega$ from $B(\cA\otimes \cB)$ to itself, we have the following equivalence between two minimization problem
\begin{align}
& \min_{\sigma_{AB}: \Tr_B \sigma_{AB}=\rho_A} 
\Tr \sigma_{AB} \Omega[\sigma_{AB}] \nonumber \\
=&
\min_{{\cal E}} \Tr {\cal E}_{A' \to B}(|\phi_{AA'}\rangle\langle \phi_{AA'}|)
\Omega[{\cal E}_{A' \to B}(|\phi_{AA'}\rangle\langle \phi_{AA'}|)]. \label{NKA}
\end{align}
Thus the minimization problem for a quantum operation is transformed to the minimization problem on the set of states  $\{\sigma_{AB}:\Tr_B \sigma_{AB}=\rho_A\}$ with linear constraints.
\fi
With the above discussion, our problem is formulated as following minimization problem
\begin{align}
\overline{{\cal G}}(c):=\min_{\rho \in {\cal M}} {\cal G}(\rho)
\label{RBE}
\end{align}
and we denote the minimizer of $\cG(\rho)$ as $\rho_*$.

We then define the $e$-projection of $\rho$ to ${\cal M}$
by 
$\Gamma^{(e)}_{{\cal M}}[\rho]$  \cite{Amari-Nagaoka}, which is defined as
$\Gamma^{(e)}_{{\cal M}}[\rho]:=
\argmin_{ \sigma \in {\cal M}}
D(\sigma\|\rho)$,
where $D(\sigma\|\rho):=\Tr \sigma (\log \sigma-\log \rho)$.
For an  element of $\sigma \in {\cal M}$, the $e$-projection $\Gamma^{(e)}_{{\cal M}}[\rho]$ satisfies Pythagorean theorem \cite[Lemma 5]{H23}:
$D(\sigma\|\rho)=
D(\sigma\|\Gamma^{(e)}_{{\cal M}}[\rho])
+ D(\Gamma^{(e)}_{{\cal M}}[\rho]\|\rho)$.
To calculate the $e$-projection $\Gamma^{(e)}_{{\cal M}}[\rho]$, we need to solve the following equations
\begin{align}
\frac{\partial}{\partial \tau^i}
\log \Tr \exp
(\log \rho + \sum_{j=1}^k H_j \tau^j )
=c_i.
\end{align}
Suppose the solution of the above equations are $\tau_*=(\tau_*^1, \ldots, \tau_*^k)$, the $e$-projection $\Gamma^{(e)}_{{\cal M}}[\rho]$ is given by
$    \Gamma^{(e)}_{{\cal M}}[\rho] = C \exp (\log \rho + \sum_{j=1}^k H_j \tau_*^j )$,
where $C$ is a normalizing constant\cite{H23}.
The solution of the above equations is given as the solution $\tau_*$ of the following minimization problem
\begin{align}
\tau_*:=
\argmin_{\tau }
\log \Big(\Tr \Big(\exp
(\log \rho + \sum_{j=1}^k H_j \tau^j )\Big)\Big)
-\sum_{i=1}^k \tau^i c_i.\Label{BNB}
\end{align}
We note here that the above  function 
is convex for $\tau$ \cite[Section III-C]{H23}.
Since we expect that the number $k$ of linear constraints is much smaller than $d^2$,
this convex minimization is much easier than the original minimization even though 
the original minimization is a convex minimization.

The quantum Arimoto-Blahut algorithm is shown in Algorithm \ref{AL1}, which employs
the conversion function $\cF_3[\cdot]$ from $\cS(\cH)$ to $\cS(\cH)$ given as
${\cal F}_3[\sigma]:= \frac{1}{\kappa[\sigma]}
\exp( \log \sigma -\frac{1}{\gamma} \Omega[\sigma])$,
where $\kappa[\sigma]$ is the normalization factor of
$\Tr \exp( \log \sigma -\frac{1}{\gamma} \Omega[\sigma])$.
When the calculation of $ \Omega[\rho]$ and 
the projection is feasible, Algorithm \ref{AL1} is feasible.

\begin{algorithm}
\caption{Quantum AB algorithm for $\cG(\rho)$}
\Label{AL1}
\begin{algorithmic}
\STATE {Choose the initial value $\rho^{(1)} \in \mathcal{M}$;} 
\REPEAT 
\STATE Calculate $\rho^{(t+1)}:=\Gamma^{(e)}_{{\cal M}}[{\cal F}_3[\rho^{(t)}]]
$;
\UNTIL{convergence.} 
\end{algorithmic}
\end{algorithm}

\subsection{Application to convex minimization}\label{subsec:numerical certified algorithm}
Quantum Arimoto-Blahut algorithm not only has a closed-form update that is physically interpretable and numerically well- behaved, it also enjoys powerful convergence guarantees and can be shown that each iteration always improves the value of the objective function with certain condition. 
Here, we discuss how to apply this algorithm to convex minimization.
\begin{condition}\label{NLT}
    All pairs $(\rho^{(t+1)},\rho^{(t)})$ satisfy 
the following condition with $(\rho,\sigma)=(\rho^{(t+1)},\rho^{(t)})$
\begin{align}
D_\Omega(\rho\|\sigma):=\Tr \rho (\Omega[\rho]- \Omega[\sigma])
&\le \gamma D(\rho\|\sigma)
\Label{BK1+} ,
\end{align}
for a sufficient large number $\gamma$.
\end{condition}

It has been proved in \cite[Theorem 1]{hayashi2024ab}, that if the condition~\ref{NLT} holds for Algorithm~\ref{AL1}, then the algorithm will always iteratively improve the value of the objective function. Thus, this property guarantee the convergence of the Algorithm~\ref{AL1}. However, there is a possibility that the convergent of the algorithm is local minimum instead of the global minimum. To this end, we need following conditions to ensure convergence to the global optimum.


\begin{condition}\Label{con3}
The minimizer $\rho_* $ satisfies
$D_\Omega(  \rho_*\|\sigma) \ge 0$ 
with any state $\sigma\in \cM $.
\end{condition}

According to \cite[Theorem 2]{hayashi2024ab}, we know that if conditions \ref{NLT} and \ref{con3} hold for Algorithm~\ref{AL1}, the algorithm will surely converge to the global minimum $\rho_*$. Moreover, we have
\begin{align}
{\cal G}(\rho^{(t_0+1)})
-{\cal G}(\rho_{*})
\le 
\frac{\gamma D(\rho_*\| \rho^{(1)}) }{t_0} \Label{XME}
\end{align}
 with any initial state $\rho^{(1)}$ and any iteration number $t_0$. Thus, we can estimate the convergence speed by using \eqref{XME}. 

Conditions \ref{NLT} and \ref{con3} do not necessarily hold in general. 
The primary objective of this paper is to provide a guarantee for the property \eqref{XME} in Algorithm~\ref{AL1} even in such cases. 
By relaxing condition \ref{NLT}, we obtain the following theorem, which ensures that Algorithm \ref{AL1} achieves the global minimum under the convexity condition and other weaker requirements.

\begin{theorem}\label{thmA}
We assume the following conditions:
States on ${\cal M}$ are parameterized as $\rho(\theta)$ with $\theta \in \mathbb{R}^d$.
Each matrix component of $\Omega[\rho]$ is a differentiable function of the components of $\rho$.
The value $\Tr \rho(\theta) \Omega[\rho(\theta)]$ is a convex function of $\theta$.
In addition, we assume the following condition:
\begin{description}
\item[(a1)]
The relation \eqref{BK1+} holds 
when $\rho = \rho(\theta_0)$ and 
$\sigma$ is an arbitrary element of the neighborhood of $\rho(\theta_0)$.
\end{description}
Then, the following three conditions are equivalent for $\theta_0$.
\begin{description}
\item[(b1)]
The point $\theta_0$ realizes the minimum value
$\min_{\rho \in {\cal M}}
\Tr \rho  \Omega[\rho]$.
\item[(b2)]
The relation $\Tr \Big(\frac{\partial }{\partial \theta^j}\rho(\theta)\Big|_{\theta=\theta_0} 
\Omega[\rho(\theta_0)]\Big) 
=0$ holds.
\item[(b3)]
$\rho(\theta_0)$ is a convergent of Algorithm \ref{AL1}.
\end{description}
\end{theorem}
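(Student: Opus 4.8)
The plan is to prove the two equivalences (b1) $\Leftrightarrow$ (b2) and (b2) $\Leftrightarrow$ (b3) separately, taking the gradient of the objective $\mathcal{G}(\rho(\theta)) = \Tr \rho(\theta)\Omega[\rho(\theta)]$ as the central object. Writing $\rho_0 := \rho(\theta_0)$ and differentiating the product gives
\[
\frac{\partial}{\partial\theta^j}\mathcal{G}(\rho(\theta))\Big|_{\theta_0} = \Tr\!\left(\frac{\partial\rho}{\partial\theta^j}\Big|_{\theta_0}\Omega[\rho_0]\right) + \Tr\!\left(\rho_0\,\frac{\partial\Omega[\rho]}{\partial\theta^j}\Big|_{\theta_0}\right),
\]
so that (b2) is precisely the vanishing of the first term. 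Since $\mathcal{G}(\rho(\theta))$ is convex in $\theta$ by hypothesis, (b1) is equivalent to the vanishing of the \emph{full} gradient; the entire difficulty in linking (b1) and (b2) is therefore to show that the second (feedback) term $\Tr(\rho_0\,\partial_j\Omega|_{\theta_0})$ vanishes, where I write $\partial_j$ for $\partial/\partial\theta^j$. This is exactly where condition (a1) enters.

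For (b2) $\Leftrightarrow$ (b3), I would read ``$\rho(\theta_0)$ is a convergent of Algorithm~\ref{AL1}'' as the fixed-point relation $\rho_0 = \Gamma^{(e)}_{\mathcal{M}}[\mathcal{F}_3[\rho_0]]$. Setting $\tau := \mathcal{F}_3[\rho_0]$, the $e$-projection is the minimizer over the affine family $\mathcal{M}$ of the convex function $D(\,\cdot\,\|\tau)$, so the fixed-point property is equivalent to the first-order optimality condition $\partial_j D(\rho(\theta)\|\tau)|_{\theta_0} = 0$. Differentiating and using $\Tr\,\partial_j\rho = 0$ reduces this to $\Tr(\partial_j\rho|_{\theta_0}(\log\rho_0 - \log\tau)) = 0$; substituting $\log\tau = \log\rho_0 - \frac{1}{\gamma}\Omega[\rho_0] - (\log\kappa)I$ annihilates the scalar part (again by $\Tr\,\partial_j\rho = 0$) and leaves exactly $\frac{1}{\gamma}\Tr(\partial_j\rho|_{\theta_0}\,\Omega[\rho_0]) = 0$, which is (b2). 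This direction is essentially a first-order computation once the fixed point is identified with the $e$-projection minimizer.

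The crux is (b1) $\Leftrightarrow$ (b2), and specifically the use of (a1) to eliminate the feedback term. I would introduce $f(\theta) := \gamma D(\rho_0\|\rho(\theta)) - D_\Omega(\rho_0\|\rho(\theta))$, which by (a1) is nonnegative on a neighborhood of $\theta_0$ and satisfies $f(\theta_0) = 0$; hence $\theta_0$ is an interior local minimizer of $f$, forcing $\nabla f(\theta_0) = 0$. The plan is then to compute the two gradients at $\theta_0$: the relative-entropy term gives $\partial_j D(\rho_0\|\rho(\theta))|_{\theta_0} = -\Tr(\rho_0\,\partial_j\log\rho|_{\theta_0}) = 0$ via the identity $\Tr(\rho_0\,\partial_j\log\rho|_{\theta_0}) = \Tr(\partial_j\rho|_{\theta_0}) = 0$, while $\partial_j D_\Omega(\rho_0\|\rho(\theta))|_{\theta_0} = -\Tr(\rho_0\,\partial_j\Omega|_{\theta_0})$. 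Consequently $\nabla f(\theta_0) = 0$ forces $\Tr(\rho_0\,\partial_j\Omega|_{\theta_0}) = 0$, so the feedback term disappears and the full gradient of $\mathcal{G}$ collapses to its first term. Combined with convexity, this yields (b1) $\Leftrightarrow$ \{full gradient $=0$\} $\Leftrightarrow$ (b2).

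I expect the main obstacle to be the careful justification of the differential identities, in particular $\Tr(\rho_0\,\partial_j\log\rho|_{\theta_0}) = \Tr(\partial_j\rho|_{\theta_0})$, which I would establish from the integral representation $\partial_j\log\rho = \int_0^\infty (\rho+sI)^{-1}(\partial_j\rho)(\rho+sI)^{-1}\,ds$ together with $\int_0^\infty \rho_0(\rho_0+sI)^{-2}\,ds = I$, and the regularity needed to pass from the pointwise inequality (a1) to a vanishing gradient (interiority of $\theta_0$ and full rank of $\rho_0$, so that $D$ and $D_\Omega$ are differentiable near $\theta_0$). Once these are in place, all three equivalences follow from the two elementary gradient computations above.
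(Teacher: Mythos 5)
Your proof is correct and its core is the same as the paper's: both arguments hinge on using (a1) to kill the feedback term $\Tr(\rho_0\,\partial_j\Omega[\rho(\theta)]|_{\theta_0})$, after which convexity identifies (b1) with the vanishing of the gradient, which by the product rule is then exactly (b2). Your route to the key cancellation --- observing that $f(\theta)=\gamma D(\rho_0\|\rho(\theta))-D_\Omega(\rho_0\|\rho(\theta))$ is nonnegative near $\theta_0$ with $f(\theta_0)=0$, hence has vanishing gradient there --- is the same first-order argument as the paper's Lemma~\ref{lemma:BB}, which instead Taylor-expands both sides of \eqref{BK1+} along $\theta\mapsto\theta\pm\epsilon$ and notes that the right-hand side is $O(\epsilon^2)$; your identity $\Tr(\rho_0\,\partial_j\log\rho|_{\theta_0})=\Tr(\partial_j\rho|_{\theta_0})=0$ is the explicit form of what the paper absorbs into ``$D$ behaves as $O(\epsilon^2)$.'' Where you genuinely diverge is the link between (b2) and (b3). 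The paper proves only (b3)$\Rightarrow$(b2), by solving the fixed-point equation to get $\Omega[\rho_0]=-\gamma\sum_j H_j\tau_{j,0}$ and pairing with $\Tr(\partial_j\rho\,H_i)=0$, and then closes the cycle (b1)$\Rightarrow$(b3) by appealing to uniqueness of the minimizer and the (implicit) existence of a convergent --- a step that is not fully justified as written. You instead characterize the fixed point $\rho_0=\Gamma^{(e)}_{\mathcal M}[\mathcal F_3[\rho_0]]$ by the first-order optimality condition of the $e$-projection, which after substituting $\log\mathcal F_3[\rho_0]$ reduces to (b2) in both directions; this gives a direct, self-contained proof of (b2)$\Leftrightarrow$(b3) that does not require uniqueness of the minimizer. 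The price is that you must justify that stationarity of the convex map $\sigma\mapsto D(\sigma\|\tau)$ over the convex set $\mathcal M$ (equivalently, in a parametrization whose differential spans the tangent space of $\mathcal M$) implies global minimality, and the regularity caveats you flag (interiority of $\theta_0$, full rank of $\rho_0$ so that $\log$ is differentiable) are real but are equally left implicit in the paper.
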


While Theorem \ref{thmA} considers convexity with respect to an arbitrary parametrization $\theta$, extending beyond simple convex combinations of density matrices, this paper focuses on cases involving simple convex combinations for our examples. 
\if0
Further, since a continuous convex function $f(\theta)$
has the form $O(\|\theta-\theta_*\|^2)$
with the minimizer $\theta_*$,
Condition (a1) holds with a suitable real number $\gamma>0$.
With such a real number $\gamma>0$,
Theorem \ref{thmA} guarantees that 
a convergent of Algorithm \ref{AL1} is 
the minimizer of a continuous convex function $f(\theta)$
when $ f$ is written as $\Tr \rho(\theta) \Omega[\rho(\theta)]$. 
\fi
The following theorem guarantees the bound \eqref{XME} under weaker conditions.

\begin{theorem}\label{thmB}
Assume the following conditions for the minimizer $\rho_*$ and Algorithm \ref{AL1}:
\begin{description}
\item[(a2)]
The relation 
$D_\Omega(  \rho_*\|\rho^{(j)}) \ge 0 $
holds for $j=1, \ldots, t-1$.
    \item[(a3)]
 The relation 
$    D_\Omega(\rho^{(j+1)}\|\rho^{(j)}) \le \gamma D(\rho^{(j+1)}\|\rho^{(j)}) $
holds for $j=1, \ldots, t-1$.
\end{description}
Then, the evaluation \eqref{XME} holds.
\end{theorem}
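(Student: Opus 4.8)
The plan is to reduce the claim to a single exact per-step identity and then telescope it, taking care to track exactly which evaluations of $D_\Omega$ and $D$ enter. This is what lets the global hypotheses Conditions~\ref{NLT} and \ref{con3} used in \cite[Theorem~2]{hayashi2024ab} be weakened to the finite, \emph{a posteriori} assumptions (a2)--(a3), which only constrain the computed iterates.

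First I would compute a one-step relation. Writing $\sigma=\rho^{(t)}$ and $\rho'=\cF_3[\sigma]$, the definition $\cF_3[\sigma]=\kappa[\sigma]^{-1}\exp(\log\sigma-\frac1\gamma\Omega[\sigma])$ gives $\log\rho'=\log\sigma-\frac1\gamma\Omega[\sigma]-(\log\kappa[\sigma])I$, so for any state $\tau$,
\begin{align}
D(\tau\|\rho')=D(\tau\|\sigma)+\tfrac1\gamma\Tr\tau\,\Omega[\sigma]+\log\kappa[\sigma].\Label{id1}
\end{align}
Applying the Pythagorean theorem for the $e$-projection to $\rho^{(t+1)}=\Gamma^{(e)}_{\cM}[\rho']$, i.e.\ $D(\tau\|\rho')=D(\tau\|\rho^{(t+1)})+D(\rho^{(t+1)}\|\rho')$ for $\tau\in\cM$, and substituting \eqref{id1} at both $\tau=\rho_*$ and $\tau=\rho^{(t+1)}$, the normalization term $\log\kappa[\sigma]$ cancels and I obtain $\gamma\big(D(\rho_*\|\rho^{(t+1)})-D(\rho_*\|\rho^{(t)})\big)=\Tr(\rho_*-\rho^{(t+1)})\Omega[\rho^{(t)}]-\gamma D(\rho^{(t+1)}\|\rho^{(t)})$. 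Then, rewriting the objective gap through $\Tr\rho\,\Omega[\rho]=\Tr\rho\,\Omega[\rho^{(t)}]+D_\Omega(\rho\|\rho^{(t)})$ for $\rho\in\{\rho^{(t+1)},\rho_*\}$ and eliminating the cross terms $\Tr(\rho^{(t+1)}-\rho_*)\Omega[\rho^{(t)}]$ with the previous relation yields the exact identity
\begin{align}
\cG(\rho^{(t+1)})-\cG(\rho_*)
=&\;\gamma D(\rho_*\|\rho^{(t)})-\gamma D(\rho_*\|\rho^{(t+1)}) \nonumber\\
&+\big(D_\Omega(\rho^{(t+1)}\|\rho^{(t)})-\gamma D(\rho^{(t+1)}\|\rho^{(t)})\big)
-D_\Omega(\rho_*\|\rho^{(t)}).\Label{idstep}
\end{align}

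The crucial observation is that the only residual terms in \eqref{idstep} are evaluated at the trajectory pair $(\rho^{(t+1)},\rho^{(t)})$ and at the single state $\rho^{(t)}$. Hence (a3) makes the parenthesized term nonpositive, and (a2) makes $-D_\Omega(\rho_*\|\rho^{(t)})$ nonpositive, so for every $j=1,\dots,t_0$ (taking $t=t_0+1$),
\begin{align}
\cG(\rho^{(j+1)})-\cG(\rho_*)\le \gamma D(\rho_*\|\rho^{(j)})-\gamma D(\rho_*\|\rho^{(j+1)}).\Label{perstep}
\end{align}

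Finally I would sum \eqref{perstep} over $j=1,\dots,t_0$: the right-hand side telescopes to $\gamma D(\rho_*\|\rho^{(1)})-\gamma D(\rho_*\|\rho^{(t_0+1)})\le\gamma D(\rho_*\|\rho^{(1)})$ by nonnegativity of $D$. For the left-hand side, note that (a3) is precisely Condition~\ref{NLT} restricted to the computed iterates, so \cite[Theorem~1]{hayashi2024ab} guarantees $\cG(\rho^{(j+1)})\ge\cG(\rho^{(t_0+1)})$ for all $j\le t_0$; the sum is therefore at least $t_0\big(\cG(\rho^{(t_0+1)})-\cG(\rho_*)\big)$, and dividing by $t_0$ gives exactly \eqref{XME}. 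I expect the main obstacle to be the algebraic bookkeeping behind \eqref{idstep}: one must verify that the $\log\kappa$ normalizations cancel and, more importantly, that the residual localizes to exactly $D_\Omega(\rho^{(t+1)}\|\rho^{(t)})$ and $D_\Omega(\rho_*\|\rho^{(t)})$, since it is this localization --- rather than any new inequality --- that replaces the global Conditions~\ref{NLT},~\ref{con3} by their trajectory-restricted, numerically checkable counterparts.
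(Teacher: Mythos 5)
Your proof is correct and follows essentially the same route as the paper: the same per-step identity relating $\gamma\bigl(D(\rho_*\|\rho^{(j)})-D(\rho_*\|\rho^{(j+1)})\bigr)$ to $\cG(\rho^{(j+1)})-\cG(\rho_*)$ plus the residuals $D_\Omega(\rho^{(j+1)}\|\rho^{(j)})-\gamma D(\rho^{(j+1)}\|\rho^{(j)})$ and $-D_\Omega(\rho_*\|\rho^{(j)})$, killed by (a3) and (a2) respectively, then telescoping together with monotone decrease of $\cG$ along the iterates. The only difference is presentational: the paper imports the per-step identity from \cite[Lemma 5]{hayashi2024ab} and proves monotonicity directly via the auxiliary function $J(\rho,\sigma)=\gamma D(\rho\|\sigma)+\Tr\rho\,\Omega(\sigma)$ and \cite[Lemma 3]{hayashi2024ab}, whereas you rederive the identity from the explicit form of $\cF_3$ and the Pythagorean theorem and cite \cite[Theorem 1]{hayashi2024ab} for monotonicity.
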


These theorems bring us the precision evaluation \eqref{XME}
by a posteriori verification in the following way.
Generally, the conditions in Theorem \ref{thmA} can be verified analytically, except for 
Condition (a1).
To verify this, after running Algorithm \ref{AL1} for a sufficiently large number of iterations $t$, we numerically check whether the relation \eqref{BK1+} holds for $\rho = \rho(\theta_0)$. 
If the ratio $\frac{D_\Omega(\rho^{(t)}\|\sigma)}{D(\rho^{(t)}\|\sigma)}$ is sufficiently smaller than $\gamma$ for a large number of samples $\sigma$
in the neighborhood of $\rho^{(t)}$, we can infer that the ratio for $\rho(\theta_0)$ is also upper-bounded by $\gamma$, because $\rho^{(t)}$ is close to $\rho(\theta_0)$. 
That is, we can conclude that Condition (a1) holds.

To ensure precision, we employ Theorem \ref{thmB}, which requires two conditions. 
If the ratio $\frac{D_\Omega(\rho^{(t)}\|\rho^{(j)})}{D(\rho^{(t)}\|\rho^{(j)})}$ is sufficiently larger than $0$ for $j=1, \ldots,t-1$, 
we can consider Condition (a2) to be satisfied because the convergent point $\rho^{(t)}$ is close to the minimizer $\rho_*$ due to the above discussion. 
Condition (a3) is easier to verify because it only needs to be checked for the $t$ iterates $\rho^{(1)}, \ldots, \rho^{(t)}$ generated by the algorithm. 
Once the above verification is passed, we can evaluate the precision of the obtained minimum value using \eqref{XME}.

\section{Application to calculate the quantum relative entropy of channels}\label{sec: application}

Due to the fundamental roles of quantum channels and the quantum relative entropy of states in quantum information theory, it is natural to ask whether there is a measure that can characterize the distinguishability between two channels. Motivated by this, the quantum relative entropy of channels is developed in \cite{gour2021entropy}. For quantum channels $\cN_{A\to B}$ and $\cM_{A\to B}$, where system $A$ and system $B$ are finite-dimensional Hilbert space, the quantum relative entropy of channels between these two channels is defined as
\begin{equation}
D(\cN\|\cM):= \sup_{\rho_{AR}} D(\cN_{A\xrightarrow{}B}(\rho_{AR})\|\cM_{A\xrightarrow{}B}(\rho_{AR})).\label{eq: def1 of qre of channel}
\end{equation}
The optimization is performed over all possible quantum states $\rho_{AR}$ on the composite space $\cH_{AR}$, where the reference system $R$ could have arbitrary size. Since the dimension of the system $R$ is also involved in this optimization, it is quite hard to compute the channel relative entropy. However, it can be proved that it suffices to optimize over state $\rho_{AR}$ such that $\rho_{AR}$ is pure and the space $R$ is isomorphic to system $A$. This reduction could significantly reduces the complexity of this optimization.

In order to apply the generalized quantum Arimoto-Blahut algorithm to compute the relative entropy of the channel, we need to transform the original optimization problem to a minimization problem whose objective function $\cG(\rho)$ has the form ${\cal G}(\rho)= \Tr \rho \Omega[\rho]$. Use of the Choi matrices of channels $\cN$ and $\cM$ implies the equivalent form of the definition \eqref{eq: def1 of qre of channel}:
\begin{align}\label{concave optimization of qre}
    D(\cN\|\cM) = 
    \sup_{\rho_A \in \cS(\cH_A)} 
    D(\sqrt{\rho_A} \Gamma_{AB}^{\cN}\sqrt{\rho_A}\|\sqrt{\rho_A}\Gamma_{AB}^{\cM}\sqrt{\rho_A}),
\end{align}
where $\Gamma_{AB}^{\cN}$ and $\Gamma_{AB}^{\cM}$ are Choi matrices of quantum channel $\cN$ and $\cM$ respectively. We note that optimization problem in \eqref{concave optimization of qre} is concave in $\rho_A$ \cite{khatri2020principles}. Then $    -D(\cN\|\cM)$ is written as the following convex optimization problem
\begin{align}\label{SJA}
     \inf_{\rho_A \in \cS(\cH_A)} 
\cG(\rho_{A})
\end{align}
where
$\cG(\rho_{A}):=
    -D(\sqrt{\rho_A} \Gamma_{AB}^{\cN}\sqrt{\rho_A}\|\sqrt{\rho_A}\Gamma_{AB}^{\cM}\sqrt{\rho_A}) $, which is known to be convex in $\rho_A$ \cite{khatri2020principles}.
Defining
\begin{align}  &\Omega_1(\rho_{A})\notag\\ &:= -\tr_B(\Gamma_{AB}^{\cN}\rho_A^{\frac{1}{2}} (\log \rho_A^{\frac{1}{2}} \Gamma_{AB}^{\cN}\rho_A^{\frac{1}{2}} - \log \rho_A^{\frac{1}{2}} \Gamma_{AB}^{\cM}\rho_A^{\frac{1}{2}})\rho_A^{-\frac{1}{2}})
\notag\\
     &\Omega(\rho_{A}) :=\frac{\Omega_1(\rho_{A})+\Omega_1(\rho_{A})^{\dagger} }{2},
\end{align}
we can show 
\begin{align}
\cG(\rho_{A})
= \tr[\rho_{A}\Omega_1(\rho_{A})]. \label{ALA1}
\end{align}

Since $\cG(\rho_{A})$ is convex in $\rho_A$,
with a suitable real number $\gamma>0$,
Theorem \ref{thmA} guarantees that the convergent of Algorithm \ref{AL1} is the unique minimizer of \eqref{SJA}.
The matrix ${\cal F}_3[\rho]$ is calculated as
\begin{align}
    \cF_3[\rho_{A}] =& \kappa \exp(\log \rho_{A} - \frac{1}{\gamma}\Omega(\rho_{A})) 
\end{align}
where $\kappa$ is the normalizing constant. 
Since \eqref{SJA} has no linear constraint, 
we do not need to consider $\Gamma^{(e)}_{\cS(\cH_A)}$.
According to the updating rule in Algorithm \ref{AL1}, 
we get $\rho_{A}\in\cS(\cH_A)\mapsto 
{\cal F}_3[\rho_{A}]$ in each iteration. 
Then the update rule is 
\begin{align}
    \rho^{(t+1)} = {\cal F}_3[\rho^{(t)}] =  \frac{\exp(\log \rho^{(t)} - \frac{1}{\gamma}\Omega(\rho^{(t)})) }{\tr(\exp(\log \rho^{(t)} - \frac{1}{\gamma}\Omega(\rho^{(t)})) )}.
\end{align}

When Algorithm \ref{AL1} is applied to the case even with an energy constraint,
we choose a Hamiltonian $H$ on the system ${\cal H}_A$,
and focus on the maximization
\begin{align}
&D_{H,E}(\cN\|\cM)\notag\\
:=& \sup_{\rho_{AR}\in \cS_{(H_j,E_j),A,R}}
D(\cN_{A\xrightarrow{}B}(\rho_{AR})\|\cM_{A\xrightarrow{}B}(\rho_{AR})),
\label{eq: def1 of qre of channelB}
\end{align}
where
\begin{align*}
&\cS_{(H_j,E_j),A,R}\notag\\
:=&\{
\rho_{AR}\in \cS(\cH_A\otimes\cH_R):
\Tr \rho_{AR} H_j\otimes I=E_j , j=1, \ldots,k \}.
\end{align*}
Repeating the same calculaiton, we find that
the problem in \eqref{eq: def1 of qre of channelB} is reduced as
follows
\begin{align}
D_{H,E}(\cN\|\cM)=
\inf_{\rho_{A}\in\cS_{(H_j,E_j)}} \cG(\rho_{A}),\label{SJAB}
\end{align}
where
\begin{align*}
\cS_{(H_j,E_j)}
:=\{\rho_{A}\in\cS(\cH_A):
\Tr \rho_{A}H_j =E_j,j=1,\ldots,k\} .
\end{align*}

Since $\cG(\rho_{A})$ is convex in $\rho_A$,
with a suitable real number $\gamma>0$,
Theorem \ref{thmA} still guarantees that the convergent of Algorithm \ref{AL1} is the unique minimizer of \eqref{SJAB}.
Since we have linear constraints $\Tr \rho_{A}H_j =E_j$,
we need to apply the projection $\Gamma^{(e)}_{\cS_{(H_j,E_j)}}$
after the application of ${\cal F}_3$.
The matrix 
$\Gamma^{(e)}_{\cS_{(H_j,E_j)}}[\cF_3[\rho_{A}]]$
is given as
\begin{align*}
\Gamma^{(e)}_{\cS_{(H_j,E_j)}}[\cF_3[\rho_{A}]] 
=& \kappa \exp(\log \rho_{A} - \frac{1}{\gamma}\Omega(\rho_{A})+\sum_{j=1}^k\tau_*^j H_j), 
\end{align*}
where
$\tau_*=(\tau_*^j) $ is given as
\begin{align}
\tau_*:=&
\argmin_{\tau }
\log \Big(\Tr \Big(\exp
(\log \rho_A - \frac{1}{\gamma}\Omega(\rho_{A}) + 
\sum_{j=1}^k\tau^j H_j )\Big)\Big) \notag\\
&-\sum_{j=1}^k\tau^j E_j.\Label{BNC}
\end{align}

\section{Numerical experiments}\label{sec:experiment}
To verify the effectiveness of our method, we investigate the performance of our method in calculating the quantum relative entropy of channel by conducting numerical experiments. 
We compare our method with the SDP-based approach in \cite{wilde2025sdp}, and we consider the same example as in \cite{wilde2025sdp}: the channel relative entropy between a dephasing (phase-flip) channel and a depolarizing channel.
The depolarizing channel $\cD_p$ with parameter $p$ is defined by
$ \cD_p(\rho) = (1-p)\rho + p\frac{I}{2}$, and
the dephasing channel is defined as 
$    \cD_{deph}(\rho) = p_{deph}\rho + (1-p_{deph})\sigma_z\rho\sigma_z$.
where $\sigma_z$ denotes the Pauli Z matrix. 
The parameter of dephasing channel $\cD_{deph}$ is fixed by $p_{deph} = 0.4$ and the depolarizing parameter varies in the range $p\in[0,0.1]$. In order to ensure the convergence of our algorithm, we need to choose a proper $\gamma$ such that it can satisfy the conditions in Theorem~\ref{thmA} and Theorem~\ref{thmB}. In our numerical experiments, we choose $\gamma=1$ and the initial point $\rho^{(1)}$ is randomly chosen from the $\cS(\cH_A)$.
Fig.~\ref{fig:exp_4} shows the relative entropy of channels $D(\cD_{deph}\|\cD_p)$ as a function of depolarizing parameter $p$. 
We can see that the numerical results we obtain are consistent with the results shown in \cite{wilde2025sdp}. 

Next, we numerically verify Conditions (a1), (a2), and (a3).
To assess Condition (a1), we compute the maximum and minimum of the ratio 
$D_{\Omega}(\rho^{(100)}\|\sigma)/D(\rho^{(100)}\|\sigma)$
over 10,000 randomly generated states $\sigma$ in a neighborhood of 
$\rho^{(100)}$, generated in the following way, as plotted in Fig. \ref{fig:exp_5}.
To choose $\sigma$ from the neighborhood of $\rho^{(100)}$,
we apply small perturbations on the point $\rho^{(100)}$.
In each trial, we draw a random Hermitian perturbation $H$ and set 
$\sigma \propto \rho^{(100)}+\epsilon H$, where $\epsilon\in(0,0.1]$ is chosen uniformly at random.
Since the point $\rho^{(100)}$ is close to the ideal convergent, 
Fig. \ref{fig:exp_5} shows that Condition (a1) holds.

To verify Condition (a3), we calculate
the maximum value of ratio $D_{\Omega}(\rho^{(j+1)}\|\rho^{(j)})/D(\rho^{(j+1)}\|\rho^{(j)})$ across sequential state pairs $j=1,..., 99$ for varying parameter $p$.
As shown in Fig.~\ref{fig:exp_7}, Condition (a3) holds for all tested values of $p$ 
except $p=0.052$ and $p=0.076$.
Indeed, this property depends on the choice of the initial state $\rho^{(1)}$.
It is expected that, if we choose another initial state, 
we still have a high possibility to satisfy Condition (a3) under the choices of 
$p=0.052, 0.076$.
To verify Condition (a2), we calculate
the maximum and minimum values of ratio 
$D_{\Omega}(\rho^{(100)}\|\rho^{(j)})/D(\rho^{(100)}\|\rho^{(j)})$ across sequential states $j=1,..., 99$ for varying parameter $p$,
as plotted in Fig. \ref{fig:exp_9}.
Since the value 
$D_{\Omega}(\rho_*\|\rho^{(j)})/D(\rho_*\|\rho^{(j)})$
is continuous for $\rho_*$ and 
$\rho^{(100)}$ is close to $\rho_*$,
Fig. \ref{fig:exp_9} suggests that 
$D_{\Omega}(\rho_*\|\rho^{(j)})/D(\rho_*\|\rho^{(j)})
\ge 0$, which implies Condition (a2). 
In fact, since the dephasing channel and depolarizing channel are both Pauli channels, these two channels are teleportation-covariant channels \cite{pirandola2017fundamental}. Thus according to the property of teleportation-covariant channel and data processing inequality, 
the maximization in (\ref{eq: def1 of qre of channel}) of these two channels is reached at the maximally entangled state, which yields the exact value of $D(\cD_{deph}\|\cD_p)$.
For completeness, we show the gap value between theoretical results and our numerical results in Fig.~\ref{fig:error}, which confirms 
that our obtained results are sufficiently close to the true values.
These evaluations enable us to apply the bound in \eqref{XME} except for $p=0.052, 0.076$.
In fact, these exceptional cases $p=0.052, 0.076$
show the impossibility to analytically show Condition (a2) with $\gamma=1$ 
in this example.

\begin{figure}[htbp]
    \centering
    \includegraphics[width = \columnwidth]{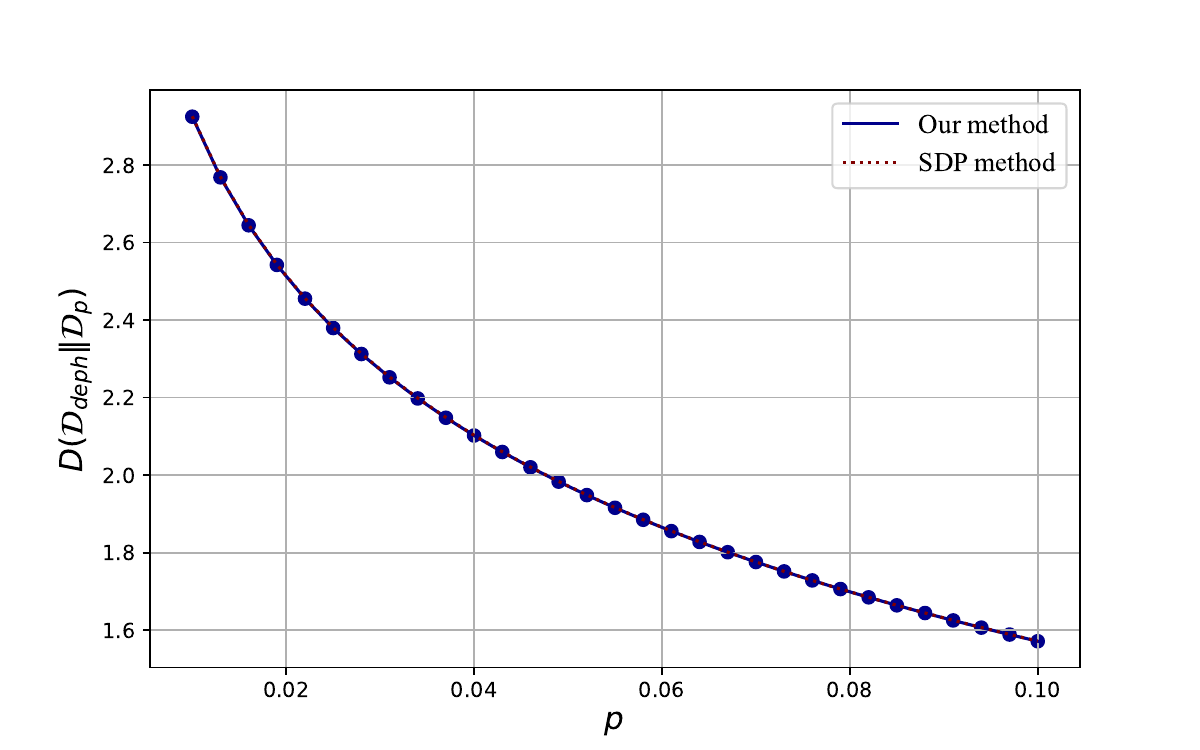}
    \caption{\textbf{Relative entropy of channels $D(\cD_{deph}\|\cD_p)$ as a function of the depolarizing parameter $p$.}
    The vertical axis shows the value of the $D(\cD_{deph}\|\cD_p)$.
    The horizontal axis shows the value of $p$. The dephasing parameter is fixed as 0.4. The blue line represents the numerical results of our method and the red dot line represents the results shown in \cite{wilde2025sdp}}
    \label{fig:exp_4}
\end{figure}

\begin{figure}[htbp]
    \centering
    \includegraphics[width = \columnwidth]{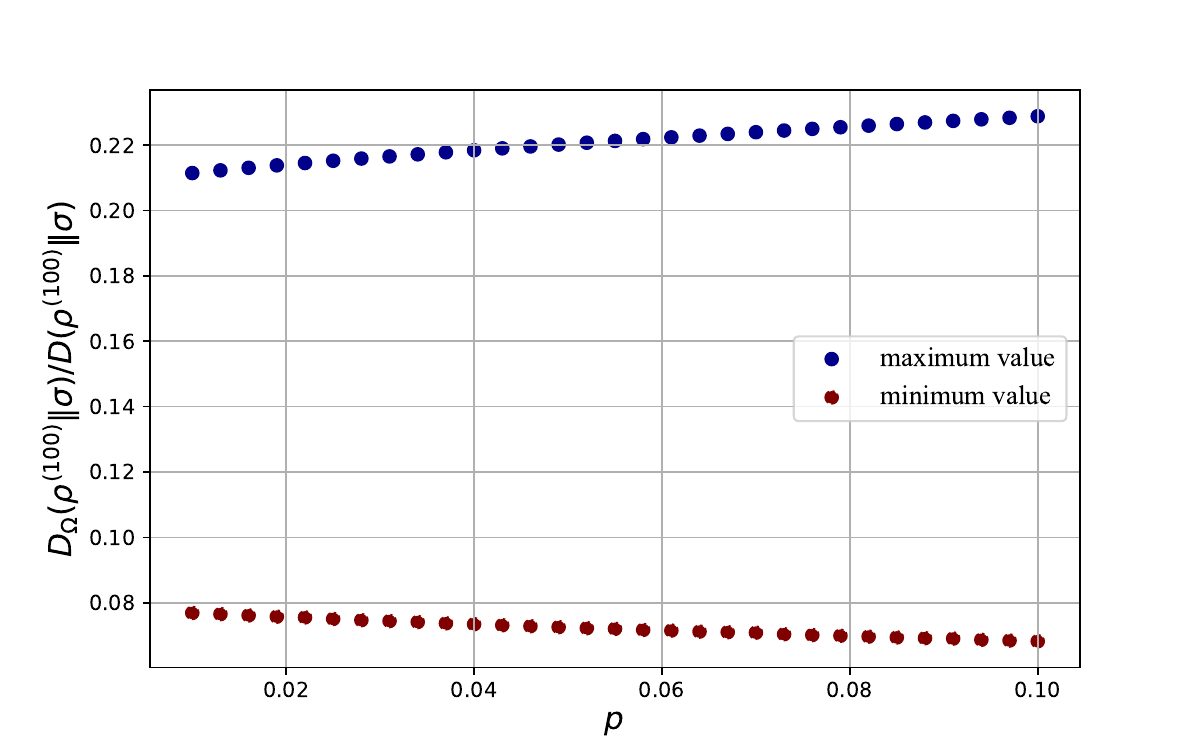}
    \caption{\textbf{The maximum value and minimum value of ratio $D_{\Omega}(\rho^{(100)}\|\sigma)/D(\rho^{(100)}\|\sigma)$ for the convergent points in Fig.~\ref{fig:exp_4}.}
    $\rho^{(100)}$ denotes the convergent points we get in Fig.~\ref{fig:exp_4}. Blue points denotes the maximum value of $D_{\Omega}(\rho^{(100)}\|\sigma)/D(\rho^{(100)}\|\sigma)$ among 10000 random states $\sigma$. Red points denotes the minimum value of $
    D_{\Omega}(\rho^{(100)}\|\sigma)/D(\rho^{(100)}\|\sigma)$ among 10000 random states $\sigma$}.
    \label{fig:exp_5}
\end{figure}

\begin{figure}[htbp]
    \centering
    \includegraphics[width = \columnwidth]{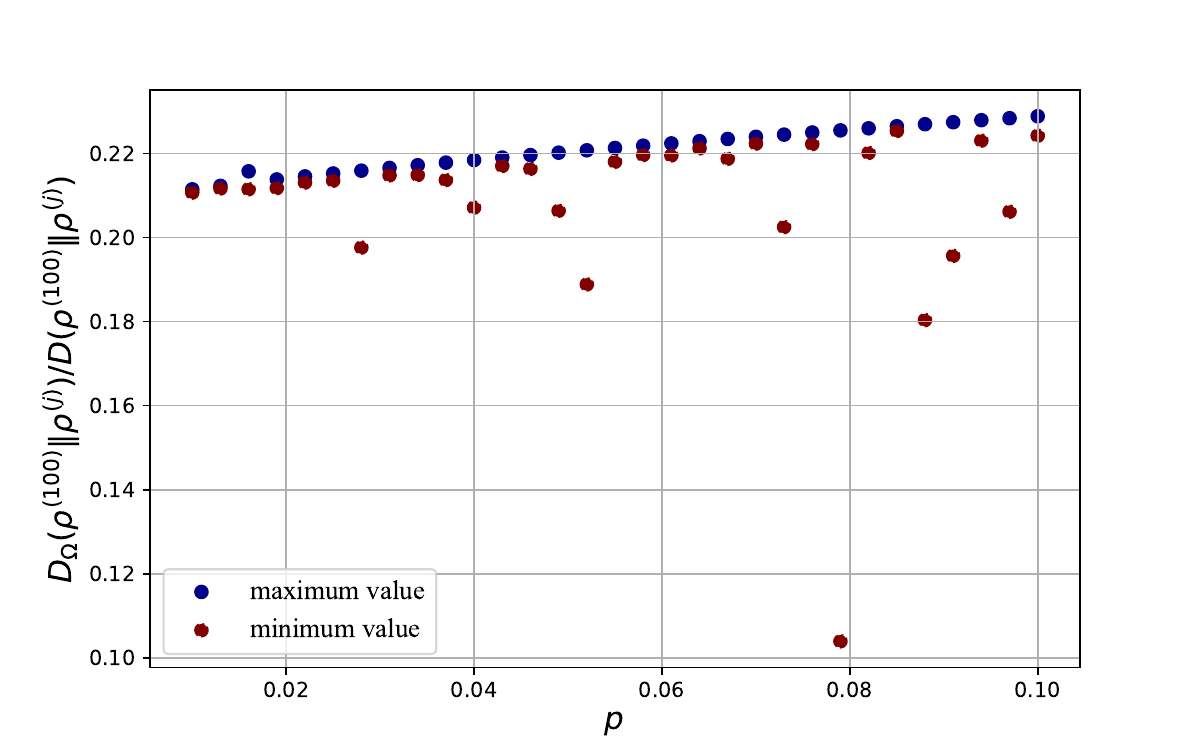}
    \caption{\textbf{The maximum value and minimum value of ratio $D_{\Omega}(\rho^{(100)}\|\rho^{(j)})/D(\rho^{(100)}\|\rho^{(j)})$ for the convergent points in Fig.~\ref{fig:exp_4}.}
    $\rho^{(100)}$ denotes the convergent points we get in Fig.~\ref{fig:exp_4}. Blue points denotes the maximum value of $D_{\Omega}(\rho^{(100)}\|\rho^{(j)})/D(\rho^{(100)}\|\rho^{(j)})$ for $j=1, \ldots, 99$. Red points denotes the minimum value of $
    D_{\Omega}(\rho^{(100)}\|\rho^{(j)})/D(\rho^{(100)}\|\rho^{(j)})$ for $j=1, \ldots, 99$.}
    \label{fig:exp_9}
\end{figure}

\begin{figure}[htbp]
    \centering
    \includegraphics[width = \columnwidth]{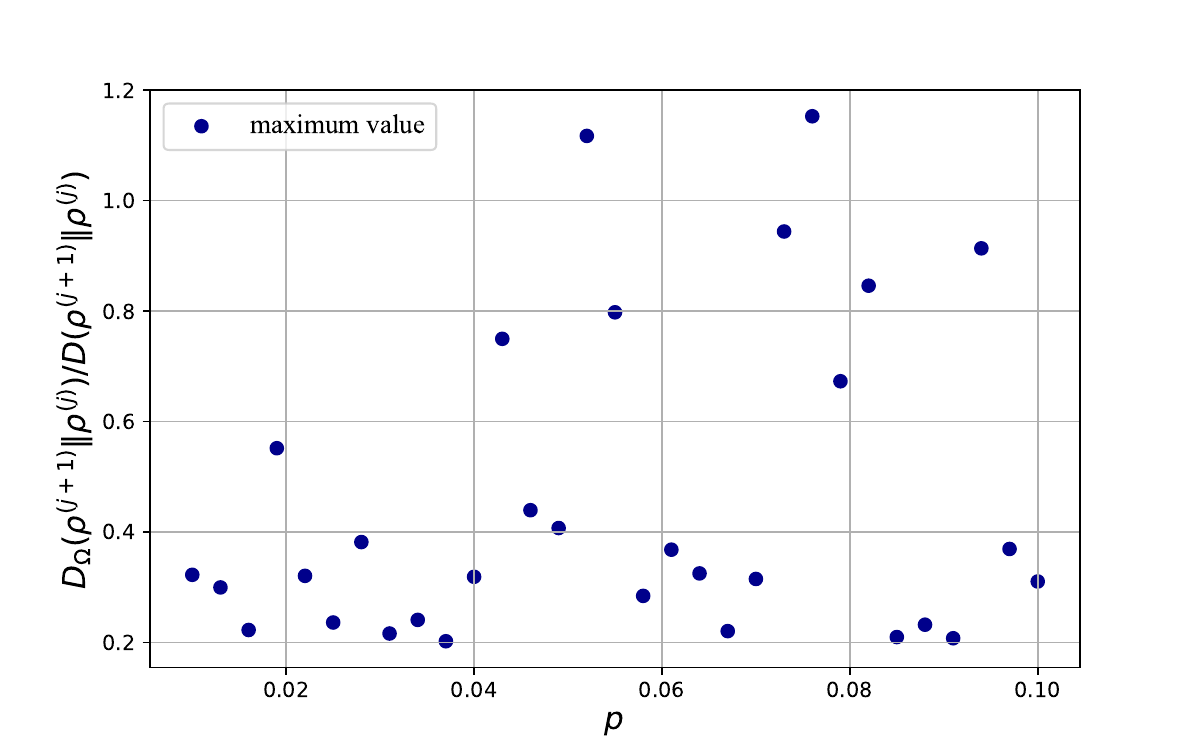}
    \caption{\textbf{The maximum value of ratio $D_{\Omega}(\rho^{(j+1)}\|\rho^{(j)})/D(\rho^{(j+1)}\|\rho^{(j)})$ across sequential state pairs ($j=1,..., 99$) for varying parameter $p$.}}.
    \label{fig:exp_7}
\end{figure}

\begin{figure}[htbp]
    \centering
    \includegraphics[width = 0.9\columnwidth]{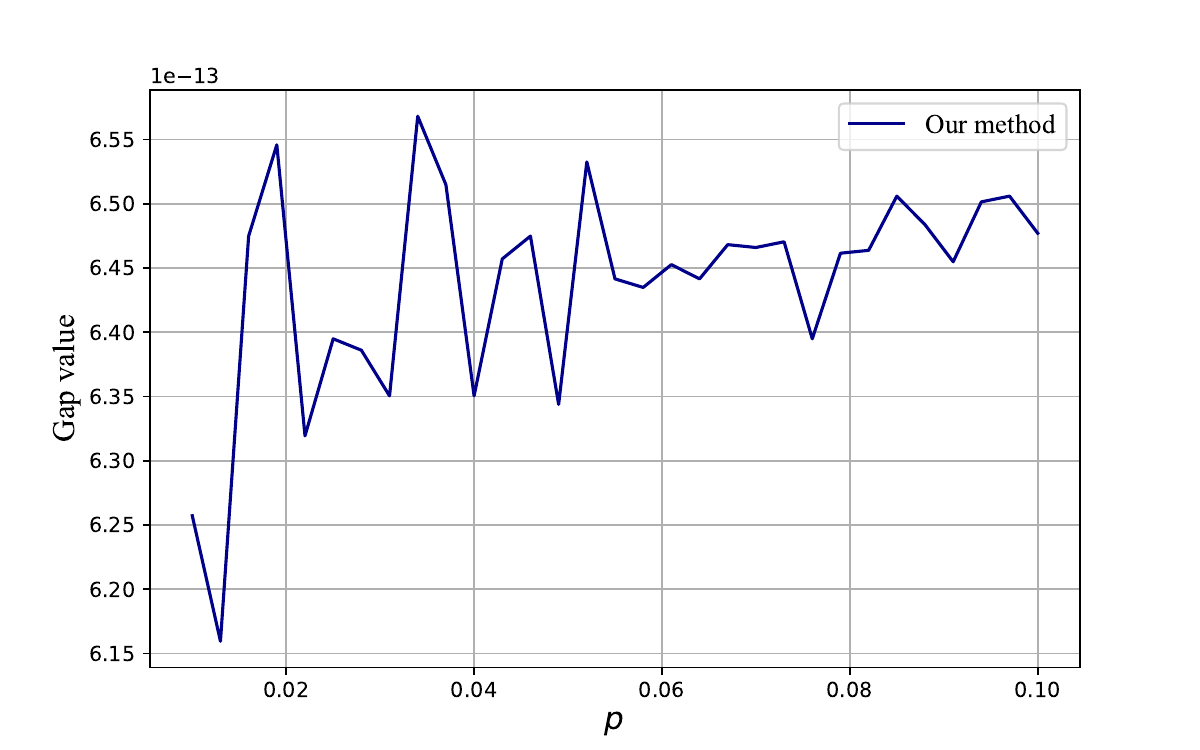}
    \caption{\textbf{The plot of the gap value between exact value and our numerical results considered in Fig.~\ref{fig:exp_4}.}
    The vertical axis shows the gap value.
    The horizontal axis shows the value of $p$.  }
    \label{fig:error}
\end{figure}

\begin{figure}[htbp]
    \centering
    \includegraphics[width = \columnwidth]{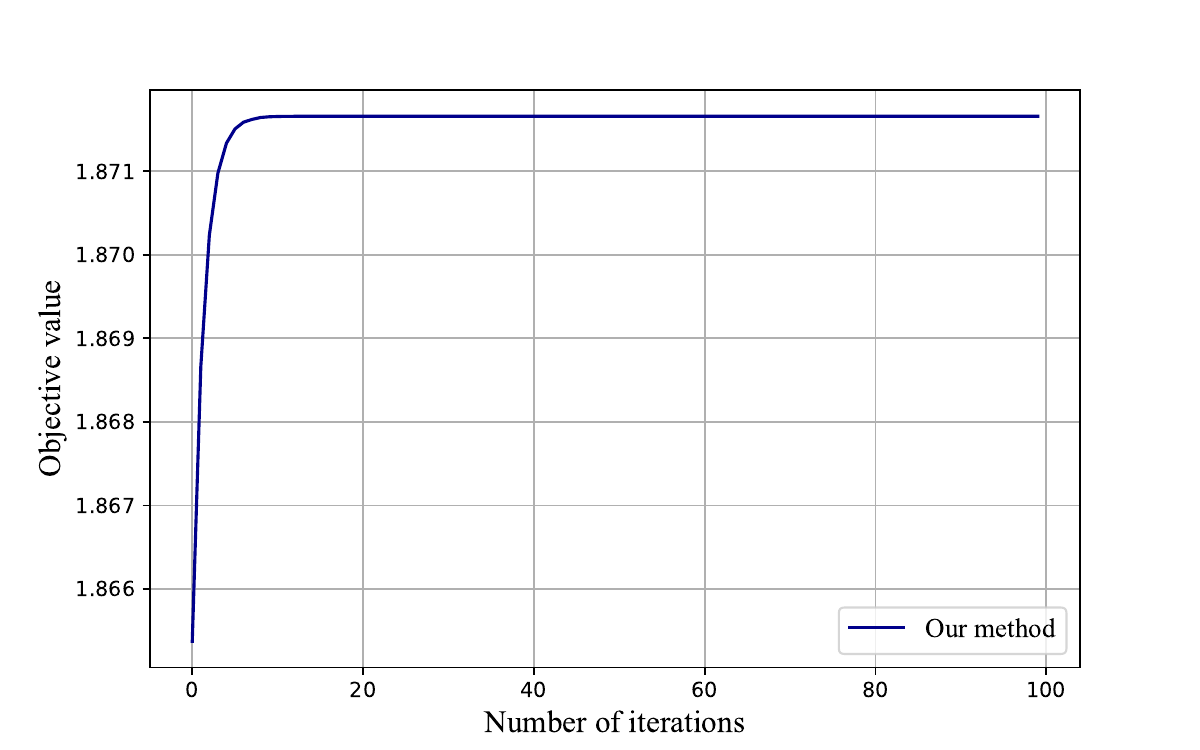}
    \caption{\textbf{Iterative behavior of our method with energy constraints.}
    The vertical axis shows the value of the objective function.
    The horizontal axis shows the number of iterations. The blue line represents the value of objective function in each iterations. The energy constraint is $\Tr\rho_A\sigma_z =-0.25$.}
    \label{fig:exp_6}
\end{figure}

Also, we present Fig.~\ref{fig:exp_6} to demonstrate the applicability of our method when we have energy constraints. In this case, we consider the relative entropy between  dephasing channel with parameter $p_{deph}=0.4$ and depolarizing channel with parameter $p=0.05$. The energy constraint is described by restricting the feasible region as the set $S_{(\sigma_z,-0.25)}:=\{\rho_A\in\cS(\cH_A): \Tr\rho_A\sigma_z = -0.25\}$. Notably, this figure shows our method remains effective in the case of energy-constrained systems.

\begin{table}[t]
\centering
\caption{Comparison between SDP-based methods and the quantum Arimoto--Blahut (QAB) algorithm for computing quantum relative entropy of channels.}
\label{tab:QAB_vs_SDP}
\resizebox{\columnwidth}{!}{%
\renewcommand{\arraystretch}{1.25}
\begin{tabular}{lcc}
\hline
\textbf{Aspect} & \textbf{SDP method in \cite{wilde2025sdp}} & \textbf{QAB algorithm} \\
\hline

Algorithmic type
& Convex optimization
& Fixed-point iteration\\

Solver required
& Yes
& No\\

Matrix variables 
& $O(\sqrt{\lambda/\varepsilon})$ 
& $1$ \\

Dimension of variables 
& $(d_A d_B)\times(d_A d_B)$ 
& $d_A\times d_A$ \\

Precision control 
& By SDP size ($\varepsilon$) 
& By iteration count \\

Memory scaling 
& Grows rapidly as $\varepsilon\to 0$ 
& Independent of $\varepsilon$ \\

\hline
\end{tabular}%
}
\end{table}

We then present Table.\ref{tab:QAB_vs_SDP} that illustrate the structural comparison between the SDP approaches in \cite{wilde2025sdp} and our quantum Arimoto-Blahut method for computing the quantum relative entropy of channels. In their SDP formulations, the optimization is performed over multiple free positive semidefinite matrix variables acting on the joint Hilbert space. The number of such matrix variables scales as $O(\sqrt{\lambda/\varepsilon})$ with the target precision, where $\lambda$ is a fixed value related to the max-relative entropy of the Choi matrices of two channels. In contrast, it is worth noting that the number of variables of the optimization problem shown in \eqref{BNB} is typically only dependent on the dimension of the Hilbert space. 

More specifically, when no linear constraint is imposed,
we do not need the $e$-projection procedure as discussed above
so that we do not need to solve the optimization problem \eqref{BNB} in this task. Thus, we only have one matrix-valued variable $\rho_A$ on the input Hilbert space $\cH_A$. Although the update rule involves operators on an enlarged Hilbert space, these operators are deterministic functions of $\rho_A$ and fixed problem data, and no optimization over matrices on the enlarged space is performed. 
Therefore, one of the main advantages of our method is that our method proceeds via a deterministic fixed-point iteration and does not require solving any semidefinite program or invoking a generic optimization solver 
while their method needs to solve a sequence of large scale convex optimization problem.

Then, for the above example calculating the channel relative entropy of dephasing channel and depolarizing channel, their method needs to optimize more than one hundred $4\times 4$ matrix to achieve the accuracy at least $\epsilon=10^{-2}$ with certain depolarizing parameters. And this number of matrix variables would increase exponentially when we consider multiple independent uses of channels. Therefore, with much fewer variables and lower dimension of variables, our method offers superior scalability and computational efficiency compared to their method . Moreover, remember that the precision of their method is closely related to the number of variables of SDP but the precision of our method is only dependent on the number of iterations. Therefore, the memory requirements of their SDP method grows rapidly as precision $\epsilon\to 0$ while the memory usage of our method is independent of the target precision. Besides this, while their method need substantial modifications and re-implementation for higher precision requirements, our approach maintains the same implementation across all accuracy levels - only tolerance parameter or the number of iterations need adjustment.

\section{Discussion}\label{sec:discussion}
In this paper, we have advanced the framework of the generalized quantum Arimoto–Blahut algorithm and demonstrated its efficacy in solving a challenging problem in quantum information theory.
First, we addressed a significant theoretical gap by proving the convergence of the algorithm to the global minimum under the condition of convexity combined with a certain natural condition. This result generalizes the applicability of the Arimoto–Blahut framework, moving beyond the limitations of previously known convergence criteria that were often analytically intractable for novel quantities.

Second, we applied this refined algorithm to the computation of the quantum relative entropy of channels. This task has been notoriously difficult because standard convex optimization methods, such as gradient descent, are impractical due to the extreme complexity of computing the analytical derivatives. While a recent alternative approach \cite{wilde2025sdp} employs semidefinite programming (SDP), it requires handling an enormous number of matrix variables, leading to prohibitive computational costs. Our method overcomes these issues by providing a computationally efficient alternative. By leveraging our first-step theoretical result, we guaranteed that the algorithm converges to the global optimum, successfully transforming an analytically and computationally daunting problem into a manageable iterative process.

Our numerical results for qubit channels illustrate several practical advantage of the proposed method. Compared with recent SDP-based approaches, out method requires only a single matrix variable on the input space and avoids introducing a large number of auxiliary matrix variables. As a result, the memory cost of our methods is independent of the target precision, in contrast to SDP formulations whose resource requirements grow rapidly as higher accuracy is demanded. Moreover, precision control in our approach is achieved simply by adjusting the iteration count or stopping tolerance, without any structural modification of the implementation. These advantages make our method notably more scalable and adaptable to higher-dimensional systems.

Finally, we note that the numerical experiments also highlight the necessity of \emph{a posteriori} verification.
In particular, for $p=0.052$ and $p=0.076$ we observed that the certification condition (e.g., (a3) with $\gamma=1$) may fail for certain initializations, illustrating that an \emph{a priori} analytical verification of the condition can be difficult in our concrete instances; this is precisely where our \emph{a posteriori} viewpoint becomes essential (Sec.~\ref{sec:experiment}).

Nevertheless, there are three remaining open problems and directions for future work. First, while we focused on the quantum relative entropy of channels, the methodology presented here is not limited to this specific quantity. The numerical verification paradigm can be applied to a wide class of optimization problems in quantum information theory. Thus it is an interesting topic to find its other applications. Second, even though our numerical verification method is empirically reliable, a more rigorous statistical or adaptive sampling strategy could further enhance its confidence guarantees. Third, since the choice of $\gamma$ influences convergence speed and stability, an adaptive $\gamma$-selection rule may improve the performance of the algorithm.

\appendices

\section{Proof of \eqref{ALA1}}
The definition of $D(\cN\|\cM)$
implies 
\begin{align} 
    &-D(\cN\|\cM) \notag\\
    =& \inf_{\rho_A \in \cS(\cH_A)} -D(\sqrt{\rho_A} \Gamma_{AB}^{\cN}\sqrt{\rho_A} \|\sqrt{\rho_A}\Gamma_{AB}^{\cM}\sqrt{\rho_A}) \notag\\
    =& \inf_{\rho_A \in \cS(\cH_A)} -\tr(\rho_A^{\frac{1}{2}} \Gamma_{AB}^{\cN}\rho_A^{\frac{1}{2}} (\log \rho_A^{\frac{1}{2}} \Gamma_{AB}^{\cN}\rho_A^{\frac{1}{2}} - \log \rho_A^{\frac{1}{2}} \Gamma_{AB}^{\cM}\rho_A^{\frac{1}{2}})) \notag\\
    =& \inf_{\rho_A \in \cS(\cH_A)} -\tr(\rho_A^{-\frac{1}{2}}\rho_A^{\frac{1}{2}}\rho_A^{\frac{1}{2}}\Gamma_{AB}^{\cN}\rho_A^{\frac{1}{2}} \notag\\
&\cdot    (\log \rho_A^{\frac{1}{2}} \Gamma_{AB}^{\cN}\rho_A^{\frac{1}{2}} - \log \rho_A^{\frac{1}{2}} \Gamma_{AB}^{\cM}\rho_A^{\frac{1}{2}})) \notag\\
    =& \inf_{\rho_A \in \cS(\cH_A)} -\tr(\rho_A \Gamma_{AB}^{\cN}\rho_A^{\frac{1}{2}} \notag\\
&\cdot    (\log \rho_A^{\frac{1}{2}} \Gamma_{AB}^{\cN}\rho_A^{\frac{1}{2}} - \log \rho_A^{\frac{1}{2}} \Gamma_{AB}^{\cM}\rho_A^{\frac{1}{2}})\rho_A^{-\frac{1}{2}}) \notag\\
    =& \inf_{\rho_A \in \cS(\cH_A)} -\tr(\rho_A \tr_B(\Gamma_{AB}^{\cN}\rho_A^{\frac{1}{2}} (\log \rho_A^{\frac{1}{2}} \Gamma_{AB}^{\cN}\rho_A^{\frac{1}{2}} \notag\\
    &  - \log \rho_A^{\frac{1}{2}} \Gamma_{AB}^{\cM}\rho_A^{\frac{1}{2}})\rho_A^{-\frac{1}{2}}))
\label{optimization problem of qre}.
\end{align}
 Since $\tr[\rho_{A}\Omega_1(\rho_{A})] \in \mathbb{R}$, we have
    \begin{align}        \tr[\rho_{A}\Omega_1(\rho_{A})] = \tr[\Omega_1^{\dagger}(\rho_{A})\rho^*_{A}] = \tr[\rho_{A}\Omega_1^{\dagger}(\rho_{A})] .
    \end{align}
Thus, we have
$    \cG(\rho_{A}) = \tr[\rho_{A}\Omega(\rho_{A})]$.

\section{Proof of Theorem \ref{thmA}}
To show Theorem \ref{thmA}, we prepare the following lemma.
\begin{lemma}\label{lemma:BB}
We consider a parameterized state $\rho(\theta)$ with $\theta \in \mathbb{R}^d$. 
We assume that each matrix component of $\Omega[\rho]$
is differentiable function of the matrix components of $\rho$.
In addition, we assume that the relation \eqref{BK1+} holds
when $\rho = \rho(\theta_0)$ and 
$\sigma$ is an arbitrary element of the neighborhood of $\rho(\theta_0)$.
Then, the relation
\begin{align}
\Tr \rho(\theta) (\frac{\partial }{\partial \theta^j}\Omega[\rho(\theta)])\Big|_{\theta=\theta_0}=0 \label{VB1},
\end{align}
i.e., 
\begin{align}
\frac{\partial }{\partial \theta^j} (\Tr \rho(\theta) \Omega[\rho(\theta)])
\Big|_{\theta=\theta_0}=
 \Tr 
\Big( \frac{\partial }{\partial \theta^j}\rho(\theta) 
  \Big|_{\theta=\theta_0}\Big)
\Omega[\rho(\theta_0)])
\label{BXN}
\end{align}
holds for $j=1, \ldots, d$.
\end{lemma}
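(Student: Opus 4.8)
The plan is to reduce the lemma to a first-order optimality comparison between the two sides of \eqref{BK1+}, regarded as functions of the second slot along the curve $\theta \mapsto \rho(\theta)$. First I would dispense with the stated equivalence between \eqref{VB1} and \eqref{BXN}, which is nothing but the Leibniz product rule: since $\frac{\partial}{\partial \theta^j}\bigl(\Tr \rho(\theta)\Omega[\rho(\theta)]\bigr) = \Tr\bigl((\frac{\partial}{\partial \theta^j}\rho(\theta))\Omega[\rho(\theta)]\bigr) + \Tr\bigl(\rho(\theta)\frac{\partial}{\partial \theta^j}\Omega[\rho(\theta)]\bigr)$, asserting that the second summand vanishes at $\theta_0$ (which is \eqref{VB1}) is exactly asserting \eqref{BXN}. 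Hence the entire content of the lemma is the identity \eqref{VB1}.

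To prove \eqref{VB1} I would introduce the two scalar functions $f(\theta) := D_\Omega(\rho(\theta_0)\|\rho(\theta)) = \Tr \rho(\theta_0)\bigl(\Omega[\rho(\theta_0)]-\Omega[\rho(\theta)]\bigr)$ and $g(\theta) := \gamma D(\rho(\theta_0)\|\rho(\theta))$. Applying the hypothesis \eqref{BK1+} with the fixed state $\rho(\theta_0)$ in the first slot and $\sigma = \rho(\theta)$ in the second — legitimate because $\rho(\theta)$ lies in the prescribed neighborhood of $\rho(\theta_0)$ for $\theta$ near $\theta_0$, by continuity of $\theta\mapsto\rho(\theta)$ — yields the pointwise inequality $f(\theta) \le g(\theta)$ on that neighborhood. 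Both functions vanish at $\theta=\theta_0$: $f(\theta_0)=0$ trivially, and $g(\theta_0)=\gamma D(\rho(\theta_0)\|\rho(\theta_0))=0$.

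The key step is then a minimality argument. The difference $h := g-f$ satisfies $h\ge 0$ with $h(\theta_0)=0$, so $\theta_0$ is a local minimizer of $h$ and therefore $\nabla h(\theta_0)=0$, i.e. $\nabla f(\theta_0)=\nabla g(\theta_0)$. Next I would show the right-hand side vanishes: since quantum relative entropy obeys $D(\rho(\theta_0)\|\sigma)\ge 0$ with equality precisely at $\sigma=\rho(\theta_0)$, the map $\theta\mapsto D(\rho(\theta_0)\|\rho(\theta))$ attains a global minimum (value $0$) at $\theta_0$, so $\nabla g(\theta_0)=0$. Consequently $\nabla f(\theta_0)=0$; but $\frac{\partial f}{\partial \theta^j}\big|_{\theta_0} = -\Tr \rho(\theta_0)\frac{\partial}{\partial \theta^j}\Omega[\rho(\theta)]\big|_{\theta_0}$, and setting this to zero is exactly \eqref{VB1}.

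The main obstacle I anticipate is the analytic bookkeeping behind $\nabla g(\theta_0)=0$: one must ensure $\rho(\theta)$ stays full-rank near $\theta_0$ so that $\log\rho(\theta)$ and its $\theta$-derivatives are well defined, and that differentiating $\Tr\rho(\theta_0)\log\rho(\theta)$ through the non-commuting operator logarithm is valid. If one prefers an explicit verification to the minimality argument, this is where the integral representation $\frac{d}{dt}\log\sigma = \int_0^\infty (\sigma+s)^{-1}\dot\sigma(\sigma+s)^{-1}\,ds$ enters; evaluated at $\sigma=\rho(\theta_0)$ it collapses, using $[\rho(\theta_0),(\rho(\theta_0)+s)^{-1}]=0$ and $\int_0^\infty \rho(\theta_0)(\rho(\theta_0)+s)^{-2}\,ds = I$, to $\Tr \rho(\theta_0)\frac{\partial}{\partial\theta^j}\log\rho(\theta)\big|_{\theta_0} = \Tr \frac{\partial}{\partial\theta^j}\rho(\theta)\big|_{\theta_0} = \frac{\partial}{\partial\theta^j}\Tr\rho(\theta)\big|_{\theta_0}=0$, the last equality holding because $\Tr\rho(\theta)\equiv 1$. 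Either route closes the argument.
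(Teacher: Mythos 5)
Your proposal is correct and is essentially the paper's argument in different packaging: the paper fixes $\rho(\theta_0)$ in the first slot, takes $\sigma=\rho(\theta_0\pm\epsilon)$, and compares the first-order term $\pm\epsilon\,\Tr\rho(\theta_0)\frac{d}{d\theta}\Omega[\rho(\theta)]$ on the left with the $O(\epsilon^2)$ behavior of $\gamma D(\rho(\theta_0)\|\rho(\theta_0\pm\epsilon))$ on the right, which is exactly your first-order optimality statement for $h=g-f$ together with $\nabla g(\theta_0)=0$. Your added remarks on the product rule and on the regularity of $\theta\mapsto\log\rho(\theta)$ are consistent with (and slightly more explicit than) what the paper does.
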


\begin{proof}
Since it is sufficient to show \eqref{VB1} in the case with $d=1$,
we discuss this case.
The condition \eqref{BK1+} implies
\begin{align}
\Tr \rho(\theta) (\Omega[\rho(\theta)]-\Omega[\rho(\theta\pm\epsilon)])
\le \gamma D(\rho(\theta)\|\rho(\theta\pm\epsilon)) \label{BP1} 
\end{align}
for $\epsilon >0$.
When $\epsilon $ is close to $0$, 
$D(\rho(\theta)\|\rho(\theta\pm \epsilon))$ behaves as $O(\epsilon^2)$.
$\Tr \rho(\theta) (\Omega[\rho(\theta)]-\Omega[\rho(\theta\pm \epsilon)])
$ is characterized as
$\pm \Tr \rho(\theta) (\frac{d}{d\theta}\Omega[\rho(\theta)]) \epsilon 
+O(\epsilon^2)$.
Thus, \eqref{BP1} implies
\begin{align}
\pm \Tr \rho(\theta) (\frac{d}{d\theta}\Omega[\rho(\theta)]) \le 0,
\end{align}
which implies \eqref{VB1}.
\end{proof}

Now, we start the proof of Theorem \ref{thmA}.
The point $\theta_0$ realizes the minimum value if and only if 
\begin{align}
\frac{\partial }{\partial \theta^j} (\Tr \rho(\theta) \Omega[\rho(\theta)])\Big|_{\theta=\theta_0}=0.
\label{BXN2}
\end{align}
Lemma \ref{lemma:BB} guarantees that 
the condition \eqref{BXN2} 
is equivalent to the condition (b2).
Thus, the condition (b1) is equivalent to the condition (b2).

Next, we assume the condition (b3).
When $\rho_0$ is a convergent of Algorithm \ref{AL1},
we have 
\begin{align}
\Gamma^{(e)}_{{\cal M}}[{\cal F}_3[\rho_0]]=\rho_0,
\end{align}
which implies that 
there exists parameters $\tau_{j,0}$ such that 
\begin{align}
\log \rho_0 -\frac{1}{\gamma}\Omega[\rho_0]= \log \rho_0+\sum_{j=0}^k H_j \tau_{j,0}
\end{align}
Thus, we have
\begin{align}
\Omega[\rho_0]= -\gamma \sum_{j=0}^k H_j \tau_{j,0}.
\end{align}
When $\rho(\theta_0)$ is a convergent $\rho_0$ of Algorithm \ref{AL1},
we have
\begin{align}
&\Tr \frac{\partial }{\partial \theta^j}\rho(\theta)|_{\theta=\theta_0} 
\Omega[\rho(\theta_0)] \notag \\
= &
-\Tr \frac{\partial }{\partial \theta^j}\rho(\theta)|_{\theta=\theta_0} 
\gamma \sum_{j=0}^k H_j \tau_{j,0}
=0,\label{BNT}
\end{align}
which implies the condition (b2).
Thus, the condition (b3) implies the conditions (b2) and (b1).
Due to the uniqueness of the minimizer of 
$\min_{\rho \in {\cal M}} \Tr \rho \Omega[\rho]$, 
the minimizer is limited to the convergent of Algorithm \ref{AL1}.
Hence, the condition (b1) implies the condition (b3).

\section{Proof of Theorem \ref{thmB}}
First, to show 
\begin{align}
{\cal G}(\rho^{(j)})
\ge {\cal G}(\rho^{(j+1)}),\label{SK0}
\end{align}
we introduce the function $J(\rho,\sigma)$ as
\begin{align}
J(\rho,\sigma):=\gamma D(\rho\|\sigma)
+\Tr \rho \Omega(\sigma).
\end{align}
The reference \cite[Lemma 3]{hayashi2024ab} shows that
\begin{align}
{\cal G}(\rho^{(j)})=J(\rho^{(j)},\rho^{(j)})
\ge J(\rho^{(j+1)},\rho^{(j)}). \label{SK1}
\end{align}
Also, we have
\begin{align}
&J(\rho^{(j+1)},\rho^{(j)})- {\cal G}(\rho^{(j+1)}) \notag\\
=&\gamma D(\rho^{(j+1)}\|\rho^{(j)})
-D_\Omega(\rho^{(j+1)}\|\rho^{(j)})
\stackrel{(a)}{\ge} & 0,\label{SK2}
\end{align}
where
$(a)$ follows from the assumption of Theorem \ref{thmB}.
The combination of \eqref{SK1} and \eqref{SK2}
implies \eqref{SK0}.

Next, we notice the following relation.
\begin{align}
&\gamma(D(\rho_*\| \rho^{(j)})- D(\rho_*\| \rho^{(j+1)})) \notag\\
&-({\cal G}(\rho^{(j+1)})-{\cal G}(\rho_*))\notag\\
\stackrel{(b)}{=} &\gamma D(\rho^{(j+1)}\|\rho^{(j)})
-D_\Omega(\rho^{(j+1)}\|\rho^{(j)})+D_\Omega(\rho_*\|\rho^{(j)}) \notag\\
\stackrel{(c)}{\ge}  &0, \label{SK5}
\end{align}
where $(b)$ follows from 
the reference \cite[Lemma 5]{hayashi2024ab}, 
and $(c)$ follows from the assumption of Theorem \ref{thmB}.

Thus, we have
\begin{align}
&\gamma(D(\rho_*\| \rho^{(1)})
\ge \gamma(D(\rho_*\| \rho^{(1)})- D(\rho_*\| \rho^{(t)})) \notag\\
=&
\sum_{j=1}^{t-1}
\gamma(D(\rho_*\| \rho^{(j)})- D(\rho_*\| \rho^{(j+1)})) \notag\\
\stackrel{(d)}{\ge} &
\sum_{j=1}^{t-1}
{\cal G}(\rho^{(j+1)})-{\cal G}(\rho_*)
\stackrel{(e)}{\ge}
\sum_{j=1}^{t-1}
{\cal G}(\rho^{(t)})-{\cal G}(\rho_*) \notag\\
=&
(t-1)
({\cal G}(\rho^{(t)})-{\cal G}(\rho_*)),
\end{align}
which implies \eqref{XME},
where 
$(d)$ follows from \eqref{SK5} and
$(e)$ follows from \eqref{SK0}.

\bibliography{references}
\bibliographystyle{plain}
\end{document}